\def\CC{{\mathbb C}}
\def\A{{\mathcal A}}
\def\B{{\mathcal B}}
\def\E{{\mathcal E}}
\def\F{{\mathcal F}}
\newtheorem{theorem}{Theorem}[section]
\newtheorem{corollary}[theorem]{Corollary}
\newtheorem{proposition}[theorem]{Proposition}
\newtheorem{lemma}[theorem]{Lemma}
\begin{document}

\title{A gap for the maximum number
of mutually unbiased bases}

\author{{\bf Mih\'aly Weiner}\footnote{On leave from the Alfr\'ed R\'enyi 
Institute of Mathematics, Budapest.} \footnote{Supported 
in part by the ERC Advanced Grant 227458 OACFT ``Operator Algebras and Conformal 
Field Theory'' and the Momentum fund of the Hungarian Academy of Sciences.}\\ 
Dep.\! of Mathematics, University of Rome ``Tor Vergata''
\\ {\tt{mweiner{@}renyi.hu}}}
\date{}
\maketitle


\begin{abstract}
A collection of (pairwise) mutually unbiased bases (in short: MUB)
in $d>1$ dimensions may consist of at most $d+1$ bases. Such 
``complete'' collections are known to exists in $\CC^d$ when $d$ is a 
power of a prime. However, in general little is known about the maximum 
number $N(d)$ of bases that a collection of MUB in $\CC^d$ can have.

In this work it is proved that a collection of $d$ MUB in
$\CC^d$ can be always completed. Hence $N(d)\neq d$ and when $d>1$ we have
a dichotomy: {\it either} $N(d)=d+1$ (so that there exists a complete
collection of MUB) {\it or} $N(d)\leq d-1$.
In the course of the proof an interesting new characterization is given 
for a linear subspace of $M_d(\CC)$ to be a subalgebra.
\end{abstract}

\section{Introduction}

Two orthonormal bases $\E=(\mathbf e_1, \ldots, \mathbf
e_d)$ and $\F=(\mathbf f_1, \ldots, \mathbf f_d)$ in $\CC^d$ such that 
\begin{equation}
|\langle \mathbf e_k,\mathbf f_j \rangle| = {\rm constant} =
\frac{1}{\sqrt{d}}
\end{equation}
for all $k,j=1,\ldots, d$,
are said to be {\bf mutually unbiased}. A famous question regarding
mutually unbiased bases (MUB) is the following: in a $d$-dimensional
complex space, at most how many orthonormal bases can be given so that 
any two of them are mutually unbiased?

The motivation of the question is coming from quantum information 
theory. MUB are useful in quantum state tomography \cite{wooters}, and 
the known quantum cryptographic protocols also rely on MUB; see for 
example \cite{security}.

Simple arguments show that the maximum number $N(d)$ of orthonormal 
bases in a collection of MUB satisfies the bound $N(d)\leq d+1$ 
for every $d>1$. A collection of $d+1$ MUB is usually referred as a 
{\bf complete collection}. When the dimension $d=p^\alpha$ is a power of 
a prime, such complete collections can be constructed 
\cite{ivanovic,woofie}. However, 
apart from this case, at the moment there is no dimension $d>1$ in which 
the value of $N(d)$ would be known. So already in dimension six the 
problem is open. Nevertheless, numerical and other evidences 
\cite{numerical,kozos} suggests that $N(6)=3$, which is much less than 
$7$ (that we would need for a complete collection.)

It seems that the problem of complete collections of 
MUB is deeply related to that of finite
projective planes (or equivalently: to complete collections of
mutually orthogonal Latin squares); see for example the construction
\cite{Latinconstruction} and the overview \cite{projective}. However,
it has not been proved that either of the two --- namely, the existence
of a finite projective plane of order $d$ and the existence of a
complete collection of MUB in $\CC^d$ ---  would imply the other.

In this respect, the result of the present work can be considered as one
more indication of the connection between the two questions. Here it
will be proved that having a collection of $d$ MUB in $\CC^d$, one
can always find and add one more basis with which it becomes
a complete collection. In general, if a collection is ``missing'' two
bases, it cannot be always completed and the first example for this occurs 
in $d=4$ dimensions; see \cite{2to5}.
This is in perfect similarity with the following. 
A collection of mutually orthogonal Latin squares ``missing''
only one element to be complete can be indeed completed\footnote{This is
well-known to experts of the field \cite{zspsz}, but it 
is difficult to give a good reference. One may say that 
it is subcase of \cite[Theorem 4.3]{bruck}, 
but it is somewhat missleading as 
the proof of this much stronger statement 
is difficult, whereas what we need is 
almost a triviality, 
e.g. in the textbook
\cite{combcourse} it is given as an exercise.}.
In general, a collection of mutually orthogonal $n\times n$ Latin 
squares ``missing'' two elements cannot be always 
completed and the smallest value\footnote{It is evident that for 
$n=1,2,3$ there can be no such example. For $n=4$ finding such 
an example simply means finding a ``bachelor'' $4\times 4$ Latin square;
i.e. one that has no orthogonal mate. The existence of bachelor Latin 
squares of many different sizes were already known to Euler and in 
\cite{bachelorLatin} it is proved that for any $n\geq 4$ there exists a 
bachelor Latin square.}
(and by \cite{2missingLatin} in fact the {\it only} value) of $n$ for which 
such an incomplete collection can be given is $n=4$.

One may have a look at the problem of MUB from several different 
point of views. It may be considered to regard Lie algebra theory 
\cite{liealg}. The original problem, which is formulated in a complex 
space, may be also turned into a real convex geometrical question and 
hence may be investigated with tools of convex geometry \cite{polytop}. 
Often questions about MUB are rephrased in terms of complex Hadamard 
matrices; see for example \cite{h6}. However, for the author of this 
work, the most natural point of view is that of operator algebras (or, 
being in finite dimensions, perhaps better to say: matrix algebras).

There is a natural way to associate a maximal abelian $*$-subalgebra
(in short: a MASA) to an orthonormal basis (ONB). In the context of 
matrix algebras, we consider a system of MASAs instead of a system of 
bases. Mutual unbiasedness is then expressed as a 
natural orthogonality relation (sometimes also called ``quasi-orthogonality''
or ``complementarity of subalgebras''). 
In fact, in the study of matrix algebras one considers systems of
orthogonal subalgebras {\it in general} (that is, systems
consisting of all kind of subalgebras --- not only maximal abelian
ones). For the topic of orthogonal subalgebras and its relation to
mutual unbiasedness see for example \cite{petz,petzkhan,ohno,opsz,pszw} and
\cite{sajat_uj}. Note that apart from the finite dimensional case, 
orthogonal subalgebras were also considered in the context of type 
I$\!$I$_1$ von Neumann algebras; see \cite{popa}.

Suppose $\A_1,\ldots, \A_d,\A_{d+1}$ is a complete collection of 
quasi-orthogonal MASAs in $M_d(\CC)$. Then $\A_{d+1}$ must be
the orthogonal complement of $V:=+_{k=1}^d (\A_k  \cap
\{\mathbbm 1\}^\perp)$. So if we are only given $d$
quasi-orthogonal MASAs, then only at one place
we can possibly find a MASA which is quasi-orthogonal to all of them: at
the orthogonal complement of $V$. All we need to show is that this 
subspace of $M_d(\CC)$ --- which is {\it a priori}
not even an algebra --- is in fact a MASA. This will 
be done by first working out an interesting new characterization
for a linear subspace of $M_d(\CC)$ to be a subalgebra. 

Can we find a (closed, ``elementary'') expression giving the 
``missing basis'' in terms of the others? It is clear where the 
``missing'' MASA is, but to find the corresponding {\it basis} we 
would need to diagonalize the matrices appearing in our MASA. This might 
require to find the roots of certain 
characteristic polynomials. 
So note that it 
might well be that in general in dimensions $d\geq 5$ there is no 
(closed, ``elementary'') expression giving the missing basis.

\section{Preliminaries}

Let $\E=(\mathbf e_1,\ldots,\mathbf 
e_d)$ be an ONB in $\CC^d$, and denote the ortho-projection onto the 
one-dimensional subspace $\CC\mathbf e_j$ by $P_{\mathbf e_j}$ for each 
$j=1,\ldots, d$. Then we may consider 
\begin{equation} \A_\E= {\rm 
Span}\{\, P_{\mathbf e_j}\, |j=1,\ldots, d\}, 
\end{equation} 
that is, 
the subspace of $M_d(\CC)$ spanned linearly by the ortho-projections 
$P_{\mathbf e_j}$ $(j=1,\ldots,d)$. It is a MASA, and actually, if 
$\A\subset M_d(\CC)$ is a MASA, then there exists an ONB $\E$ such that 
$\A=\A_\E$.

There is a natural scalar product on $M_d(\CC)$; the so-called
{\it Hilbert-Schmidt} scalar product, defined by the formula
\begin{equation}
\langle A,B\rangle = {\rm Tr}(A^*B)\;\;\;\;\;\;\;\;\; (A,B\in M_d(\CC)).
\end{equation}
In this sense, if $\A\subset M_d(\CC)$ is a given linear subspace, one
can consider the ortho-projection $E_\A$ onto $\A$. When $\A$ is
actually a $*$-subalgebra containing $\mathbbm 1 \in M_d(\CC)$, then
$E_\A$ is nothing else than the so-called {\it trace-preserving
conditional expectation} onto $\A$. If more in particular $\A =\A_\E$ is
the MASA associated to the ONB $\E$, then an easy check shows that
\begin{equation}\label{expectationontoMASA1}
E_{\A_\E}(X) = \sum_{k=1}^d P_{\mathbf e_k} X P_{\mathbf e_k}
\end{equation}
for all $X\in M_d(\CC)$.

Two MASAs $\A,\B\subset M_d(\CC)$, as subspaces, cannot be orthogonal,
since $\A\cap \B\neq \{0\}$ as $\mathbbm 1\in \A\cap \B$. At most, the
subspaces $\A\cap \{\mathbbm 1\}^\perp$ and $\B\cap \{\mathbbm
1\}^\perp$ can be orthogonal, in which case we say that $\A$ and $\B$
are {\bf orthogonal subalgebras}. A direct consequence of the defintions 
of the Hilbert-Schmidt scalar product and of subalgebra-orthogonality is
that $\A$ and $\B$ are orthogonal subalgebras of $M_d(\CC)$ if and only if for all 
$A\in \A$ and $B\in \B$,
\begin{equation}\label{product-trace}
\tau(A B) = \tau(A)\tau(B),
\end{equation}
where $\tau = \frac{1}{d}{\rm Tr}$ is the normalized trace. 

As is well-known, --- but in any case it can be obtained by simply
substituting $A:=P_{\mathbf e_k}$ and $B:=P_{\mathbf f_j}$
into (\ref{product-trace}) --- two MASAs $\A_\E$ and
$\A_\F$ in $M_d(\CC)$ are orthogonal if and only if $\E$ and $\F$
are mutually unbiased. So the problem of finding a certain number of
MUB is equivalent to finding the same number
of orthogonal MASAs.

The dimension of $\A\cap \{\mathbbm 1\}^\perp$
is ${\rm dim}(\A)-1 = d-1$ for a MASA $\A$, whereas the dimension
of $M_d(\CC)\cap \{\mathbbm 1\}^\perp$ is $d^2-1$. However, if $d>1$, 
then
in a $(d^2-1)$-dimensional space there can be at most
\begin{equation}
\frac{d^2-1}{d-1} = d+1
\end{equation}
pairwise orthogonal, $(d-1)$-dimensional subspaces. So when $d>1$, a
collection of orthogonal MASAs can have at most $d+1$ elements;
this is one of the ways one can obtain the well-known upper bound on
$N(d)$.

We shall finish this section by recalling an important fact about
orthonormal bases in $M_d(\CC)$. Its proof can be found for example in
\cite{werner}; but one could also have a look at \cite[Proposition
1]{ohnopetz}, which is a stronger generalization. However, for
self-containment let us see now the statement together with its proof.

\begin{lemma}\label{citedlemma}
Let $A_1,\ldots , A_{d^2}$ be an ONB in $M_d(\CC)$. Then
$$\sum_{k=1}^{d^2} A_k^* X A_k =  {\rm Tr}(X) \, \mathbbm 1$$
for all $X\in M_d(\CC)$.
\end{lemma}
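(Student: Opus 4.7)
My plan is to exploit the fact that the map $\Phi(X):=\sum_{k=1}^{d^2} A_k^* X A_k$ does not actually depend on the choice of ONB, which reduces the problem to a direct calculation with a convenient basis such as the matrix units. I expect this decoupling to be the main (and really only) conceptual content; the rest is bookkeeping.

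First I would verify independence of the ONB. If $(A'_k)_{k=1}^{d^2}$ is another ONB, there is a unitary matrix $U=(U_{ij})\in M_{d^2}(\CC)$ such that $A'_i = \sum_j U_{ij} A_j$. Substituting into $\Phi$ and using unitarity of $U$ gives
\begin{equation*}
\sum_i (A'_i)^* X A'_i = \sum_{i,j,k} \overline{U_{ij}} U_{ik} A_j^* X A_k = \sum_{j,k} \Bigl(\sum_i \overline{U_{ij}} U_{ik}\Bigr) A_j^* X A_k = \sum_j A_j^* X A_j,
\end{equation*}
so $\Phi$ is an invariant of the ambient Hilbert-Schmidt space and does not remember which particular ONB was used to write it down.

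Next I would evaluate $\Phi$ on the standard matrix units. The $d^2$ matrices $E_{ij}$ (with $1$ in the $(i,j)$ entry and zero elsewhere) form an ONB of $M_d(\CC)$ because $\langle E_{ij}, E_{kl}\rangle = \mathrm{Tr}(E_{ji}E_{kl}) = \delta_{ik}\delta_{jl}$. A direct computation using $E_{ji}XE_{ij} = X_{ii}\,E_{jj}$ yields
\begin{equation*}
\sum_{i,j=1}^d E_{ij}^* X E_{ij} = \sum_{i,j=1}^d X_{ii} E_{jj} = \mathrm{Tr}(X) \sum_{j=1}^d E_{jj} = \mathrm{Tr}(X)\,\mathbbm 1.
\end{equation*}
Combining this with the first step gives the claim for any ONB.

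There is no real obstacle here: the only thing to watch is that the change-of-basis matrix between two Hilbert-Schmidt ONBs of $M_d(\CC)$ is genuinely unitary (which is automatic from the orthonormality definition), so the sesquilinear sum $\sum_i \overline{U_{ij}}U_{ik}$ collapses to $\delta_{jk}$ as required.
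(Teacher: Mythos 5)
Your proof is correct and follows essentially the same route as the paper: establish that $\sum_k A_k^* X A_k$ is independent of the chosen ONB via unitarity of the change-of-basis matrix, then evaluate on the matrix units. The only difference is that you carry out the matrix-unit computation explicitly, which the paper leaves as an elementary check.
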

\begin{proof}
Let $B_1,\ldots, B_{d^2}$ another ONB in $M_d(\CC)$. Then there exist
complex coefficients $\lambda_{k,j}$ $(k,j=1,\ldots, d^2)$ such that
$B_k = \sum_{j} \lambda_{k,j} A_j$. Since a linear map that takes an ONB
into an ONB must be unitary, we have that $\sum_{k=1}^{d^2}
\overline{\lambda_{k,j}}
\lambda_{k,l} = \delta_{j,l}$. Hence
\begin{equation}
\sum_{k=1}^{d^2} B_k^* X B_k =
\sum_{k,j,l=1}^{d^2} (\lambda_{k,j}A_j)^* X (\lambda_{k,l}
A_k) = \sum_{k,j,l=1}^{d^2} \overline{\lambda_{k,j}}\lambda_{k,l}
A_j^* X A_l 
= \sum_{j=1}^{d^2}
A_j^* X A_j
\end{equation}
showing that the sum appearing in the statement is independent of the
chosen ONB. Thus the formula can be verified by an elementary check
using the ONB consisting of ``matrix units''.
\end{proof}

Note that the same argument, together with formula
(\ref{expectationontoMASA1}), shows that if $\A\subset M_d(\CC)$ is a
MASA then for {\it any} ONB
$A_1,\ldots, A_d$ in $\A$ we have that
\begin{equation}\label{expectationontoMASA2}
E_{\A}(X) = \sum_k A_k^* X A_k.
\end{equation}
for all $X\in M_d(\CC)$.

\section{The ``missing'' basis found}

Suppose we are given a collection of $d$ MUB in $\CC^d$. As was 
explained, this gives us $d$ pairwise
orthogonal MASAs in $M_d(\CC)$; let us denote them by
$\A_1,\ldots, \A_d$.

The subspaces $\A_k\cap \{\mathbbm 1\}^\perp$ $(k=1,\ldots, d)$ are
$d-1$ dimensional, orthogonal subspaces. Hence $V:= +_{k=1}^d (\A_k\cap
\{\mathbbm 1\}^\perp)$ is $(d^2-d)$-dimensional, and $V^\perp$ is a
$d$-dimensional subspace in $M_d(\CC)$. Our aim is to prove that
$\B:=V^\perp$ is actually a MASA. However, it is not even clear whether
it is an {\it algebra} (that is, whether it is closed for
multiplication). There are two things though that are rather
evident. First, that $\mathbbm 1\in \B$. Second, that $\B$ is a 
self-adjoint subspace: $X\in \B\, \Leftrightarrow X^*\in\B$. This 
second property follows easily from the fact that it holds for
$\A_1,\ldots \A_d$ and that the restriction of the Hilbert-Schmidt
scalar product onto the real subspace of self-adjoints is real.
\begin{lemma}
\label{subalg_char}
Let $K\subset M_d(\CC)$ be a self-adjoint linear subspace containing 
$\mathbbm 1\in M_d(\CC)$ and let further $E_K$ stand for the ortho-projection 
onto $K$. Then $K$ is a  subalgebra of $M_d(\CC)$ if and only if $E_K$ is 
2-positive. 
\end{lemma}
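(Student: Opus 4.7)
The plan is to prove each direction separately. The forward implication is essentially standard: if $K$ is a unital $*$-subalgebra, then $E_K$ is the trace-preserving conditional expectation onto $K$, and such an expectation is completely positive (in particular 2-positive). This can be seen either via an averaging formula over the unitary group of the commutant of $K$, or directly from a Kraus-type decomposition analogous to (\ref{expectationontoMASA2}).

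For the substantive converse, assume $E_K$ is 2-positive. Two elementary observations set the stage: since $\mathbbm 1\in K$, the map $E_K$ is unital; and as the ortho-projection onto a subspace containing $\mathbbm 1$ it is trace-preserving, via $\mathrm{Tr}(E_K(Y))=\langle\mathbbm 1,E_K(Y)\rangle=\langle E_K(\mathbbm 1),Y\rangle=\mathrm{Tr}(Y)$, using that $E_K$ is self-adjoint in the Hilbert--Schmidt sense. The key ingredient is the Kadison--Schwarz inequality: any unital 2-positive map $\Phi$ satisfies $\Phi(X^*X)\geq \Phi(X)^*\Phi(X)$ for all $X$. This is obtained by applying the 2-positivity of $\Phi$ to the manifestly positive rank-one block $\bigl(\begin{smallmatrix}\mathbbm 1 & X \\ X^* & X^*X\end{smallmatrix}\bigr)=\bigl(\begin{smallmatrix}\mathbbm 1 \\ X^*\end{smallmatrix}\bigr)\bigl(\begin{smallmatrix}\mathbbm 1 & X\end{smallmatrix}\bigr)\geq 0$ and reading off the Schur complement (using $\Phi(\mathbbm 1)=\mathbbm 1$).

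Specialising to $X\in K$, where $E_K(X)=X$, Kadison--Schwarz becomes $E_K(X^*X)-X^*X\geq 0$. But this positive matrix has zero trace by trace-preservation, so it vanishes, giving $E_K(X^*X)=X^*X$, i.e.\ $X^*X\in K$. A standard polarization of the quadratic form $X\mapsto X^*X$, legitimate because $K$ is a linear subspace, then upgrades this to $Y^*X\in K$ for all $X,Y\in K$; and since $K=K^*$, substituting $X^*$ for $X$ yields $YX\in K$, so $K$ is closed under multiplication. The main conceptual step --- and essentially the only non-mechanical one --- is the passage from the Schwarz \emph{inequality} to an actual equality via trace-preservation; this is what converts the weak hypothesis of 2-positivity of $E_K$ into genuine multiplicative closure of $K$.
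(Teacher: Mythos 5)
Your proof is correct, and its skeleton coincides with the paper's: both directions hinge on the same two observations, namely that $E_K$ is trace-preserving (via $\mathrm{Tr}(E_K(Y))=\langle\mathbbm 1,E_K(Y)\rangle=\langle E_K(\mathbbm 1),Y\rangle=\mathrm{Tr}(Y)$) and that 2-positivity yields the Schwarz inequality $E_K(X^*X)\geq E_K(X)^*E_K(X)$, which for $X\in K$ is forced into an equality because the positive matrix $E_K(X^*X)-X^*X$ has zero trace. Where you diverge is in how multiplicative closure is then extracted, and in what you cite versus prove. The paper quotes Choi twice: Corollary 2.8 for the Schwarz inequality of 2-positive maps, and Theorem 3.1 (the multiplicative domain) to conclude from $E_K(X^*X)=E_K(X^*)E_K(X)$ that $XY=E_K(X)E_K(Y)=E_K(XY)\in K$ for $X,Y\in K$. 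You instead prove the Schwarz inequality on the spot (the rank-one block $\bigl(\begin{smallmatrix}\mathbbm 1 & X\\ X^* & X^*X\end{smallmatrix}\bigr)\geq 0$ plus the Schur complement, using $E_K(\mathbbm 1)=\mathbbm 1$), and you bypass the multiplicative-domain machinery entirely: from $Z^*Z\in K$ for all $Z\in K$ you polarize, $4\,Y^*X=\sum_{k=0}^{3}i^k(X+i^kY)^*(X+i^kY)$, and use $K=K^*$ to get $YX\in K$. This buys a self-contained, more elementary argument that only needs positivity of a single $2\times 2$ block and linearity of $K$, at the cost of slightly more computation; the paper's route is shorter on the page but leans on Choi's structural results. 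One pedantic point you glossed over: writing the lower-left entry of the transformed block as $E_K(X)^*$ uses $E_K(X^*)=E_K(X)^*$, which is immediate here because $K$ is a self-adjoint subspace (or because any positive map preserves adjoints) --- worth a half-sentence, but not a gap.
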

\begin{proof}
First let us note that $E_K$ automatically preserves the trace: 
\begin{equation}
{\rm Tr}(E_K(X)) = 
\langle \mathbbm 1, E_K(X)\rangle = \langle E_K(\mathbbm 
1),X\rangle = 
\langle \mathbbm 1, X\rangle = {\rm Tr}(X).
\end{equation}
Now if $K$ is a subalgebra of $M_d(\CC)$, then $E_K$ is the trace-preserving 
conditional expectation onto $K$ whose {\it complete} positivity is well-known. 
{\it Vice versa}, if $E_K$ is 2-positive then by \cite[Corollary 2.8]{choi} one has 
the operator-inequality
\begin{equation}
E_K(X^*X) \geq E_K(X^*)E_K(X).
\end{equation}
In particular, if $X\in K$ then $E_K(X^*X) \geq X^*X$ and by 
applying the trace on both sides one further sees that it is actually an equality: 
$E_K(X^*X)= X^*X = E_K(X^*)E_K(X)$. Then by \cite[Theorem 3.1]{choi} it follows 
that $K$ is in the {\it multiplicative domain} of $E_K$. Hence if $X,Y\in K$    
then $XY = E_K(X)E_K(Y) = E_K(XY)\in K$ showing that $K$ is a subalgebra of 
$M_d(\CC)$.
\end{proof}
\begin{lemma}
Let $B_1,\ldots,B_n$ an ONB in $\B$. Then
$E_\B(X)=\sum_k B_k^* X B_k$ for all $X\in M_d(\CC)$, where
$E_\B$ is the ortho-projection onto $\B$.
\end{lemma}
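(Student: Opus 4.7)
The plan is to extend $B_1,\ldots,B_d$ to a full ONB of $M_d(\CC)$ and apply Lemma~\ref{citedlemma}, then peel off the extra contributions using (\ref{expectationontoMASA2}). First I would check that $n=d$: since the $d$ subspaces $\A_k\cap\{\mathbbm 1\}^\perp$ are pairwise orthogonal $(d-1)$-dimensional subspaces, $\dim V=d(d-1)$ and hence $\dim\B=d^2-d(d-1)=d$.

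Next, for each $k=1,\ldots,d$ I would choose an ONB $C_1^{(k)},\ldots,C_{d-1}^{(k)}$ of $\A_k\cap\{\mathbbm 1\}^\perp$; together with $\mathbbm 1/\sqrt d$ these form an ONB of $\A_k$, so formula (\ref{expectationontoMASA2}) yields
\[
\sum_{j=1}^{d-1} (C_j^{(k)})^*\, X\, C_j^{(k)} \;=\; E_{\A_k}(X) \;-\; \frac{1}{d}\,X.
\]
Moreover, the $d^2$ matrices $\{B_j\}_j\cup\{C_j^{(k)}\}_{j,k}$ are orthonormal (pairwise orthogonality follows from $\B\perp\A_k\cap\{\mathbbm 1\}^\perp$ for each $k$ and from the mutual orthogonality of the subspaces $\A_k\cap\{\mathbbm 1\}^\perp$), and a simple count gives $d+d(d-1)=d^2$, so they form an ONB of $M_d(\CC)$.

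Then I would apply Lemma~\ref{citedlemma} to this combined ONB and substitute the displayed identity to get
\[
\sum_{j=1}^{d} B_j^*\, X\, B_j \;=\; {\rm Tr}(X)\,\mathbbm 1 \;+\; X \;-\; \sum_{k=1}^{d} E_{\A_k}(X).
\]
The final step is to recognize the right-hand side as $E_\B(X)$. Since $\B=V^\perp$ we have $E_\B=I-E_V$, and because $V$ is the orthogonal direct sum of the $\A_k\cap\{\mathbbm 1\}^\perp$, we may decompose $E_V=\sum_k(E_{\A_k}-E_{\CC\mathbbm 1})$; using $E_{\CC\mathbbm 1}(X)=\frac{1}{d}{\rm Tr}(X)\mathbbm 1$ and summing the $d$ copies produces exactly $\sum_k E_{\A_k}(X)-{\rm Tr}(X)\mathbbm 1$, so $E_\B(X)=X+{\rm Tr}(X)\mathbbm 1-\sum_k E_{\A_k}(X)$, matching the formula above.

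No substantial obstacle is expected: the only non-routine move is to spot that Lemma~\ref{citedlemma} should be invoked on the combined ONB, after which (\ref{expectationontoMASA2}) and the orthogonal decomposition $M_d(\CC)=\B\oplus\bigoplus_k (\A_k\cap\{\mathbbm 1\}^\perp)$ do all the rest of the bookkeeping.
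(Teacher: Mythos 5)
Your proposal is correct and follows essentially the same route as the paper: complete $B_1,\ldots,B_d$ by ONBs of the subspaces $\A_k\cap\{\mathbbm 1\}^\perp$, apply Lemma~\ref{citedlemma} to the combined ONB together with formula (\ref{expectationontoMASA2}), and identify the remainder as $(id-E_V)(X)=E_\B(X)$ via $E_{\A_k}-E_{\CC\mathbbm 1}=E_{\A_k\cap\{\mathbbm 1\}^\perp}$. No gaps.
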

\begin{proof}
Let us fix an ONB $A^{(k)}_1\ldots A^{(k)}_{d-1}$ in
$(\A_k\cap \{\mathbbm 1\}^\perp)$ for each $k=1,\ldots, d$.
Then, on one hand, $A^{(k)}_1\ldots
A^{(k)}_{d-1},\frac{1}{\sqrt{d}}\mathbbm 1$ is an ONB in $\A_k$. On the
other hand, the $d(d-1)$ elements, $A^{(k)}_j$
$(k=1,\ldots,d;\, j=1,\ldots,d-1)$, together with
$B_1,\ldots, B_d$, form an ONB in the full space $M_d(\CC)$. So,
on one hand, by formula (\ref{expectationontoMASA2}) we have that
\begin{equation}
\sum_j (A^{(k)}_j)^* X A^{(k)}_j + \frac{1}{\sqrt{d}}\mathbbm 1 X
\frac{1}{\sqrt{d}} \mathbbm 1 = E_{\A_k}(X),
\end{equation}
implying that
$\sum_j (A^{(k)}_j)^* X A^{(k)}_j = E_{\A_k}(X) - \frac{1}{d}X$. On the
other hand, by Lemma \ref{citedlemma},
\begin{equation}
\sum_n B_n^* X B_n + \sum_{k,j} (A^{(k)}_j)^* X A^{(k)}_j = {\rm Tr}(X)
\mathbbm 1.
\end{equation}
Hence
\begin{equation}
\sum_n B_n^* X B_n = {\rm Tr}(X) \mathbbm 1 -
\sum_{k,j} (A^{(k)}_j)^* X A^{(k)}_j = 
X - \sum_{k=1}^d (E_{\A_k}(X)-\frac{1}{d}{\rm Tr}(X)\mathbbm 1).
\end{equation}
But $\frac{1}{d}{\rm Tr}(X) \mathbbm 1= \langle
\frac{1}{\sqrt{d}}\mathbbm 1, X\rangle \frac{1}{\sqrt{d}}\mathbbm 1 =
E_{\CC\mathbbm 1}(X)$. Thus $E_{\A_k}(X)-\frac{1}{d}{\rm Tr}(X)\mathbbm
1 = E_{\A_k}(X) - E_{\CC\mathbbm 1}(X) =E_{(\A_k\cap\{\mathbbm
1\}^\perp)}(X)$, since $\CC\mathbbm \subset \A_k$. So finaly we obtain
that $\sum_n B_n^* X B_n =$
\begin{equation}
= X - \sum_k  E_{(\A_k\cap\{\mathbbm 1\}^\perp)}(X)
= (id-E_V)(X) = E_{V^\perp}(X) = E_\B(X)
\end{equation}
since $V$ is spanned by the $d$ pairwise orthogonal subspaces
$(\A_k\cap\{\mathbbm 1\}^\perp)$ $(k=1,\ldots, d)$.
\end{proof}
\begin{proposition}
The subspace $\B$ is a MASA.
\end{proposition}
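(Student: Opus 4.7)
The plan is to combine the Kraus-form identity $E_\B(X)=\sum_k B_k^* X B_k$ just proved with Lemma~\ref{subalg_char}. The right-hand side is manifestly in Kraus form, hence completely positive and \emph{a fortiori} $2$-positive; since $\B$ is self-adjoint and contains $\mathbbm 1$, Lemma~\ref{subalg_char} immediately yields that $\B$ is a $*$-subalgebra of $M_d(\CC)$. Since a unital abelian $*$-subalgebra of dimension $d$ in $M_d(\CC)$ is automatically a MASA, the whole problem reduces to showing that $\B$ is commutative.

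For this, I would first choose the basis $B_1,\ldots,B_d$ of $\B$ to consist of Hermitian matrices. This is legitimate: the Hermitian part of $\B$ is a real Hilbert space of real dimension $d$ under the (then real-valued) Hilbert--Schmidt form, so an $\RR$-orthonormal basis there is automatically a $\CC$-ONB of $\B$. With $B_k^*=B_k$ the Kraus identity reads $E_\B(X)=\sum_k B_k X B_k$. Evaluating at $X=\mathbbm 1$ gives $\sum_k B_k^2=\mathbbm 1$, and evaluating at $X=B_j$ (so that $E_\B(B_j)=B_j$) gives the reproducing relation $B_j=\sum_k B_k B_j B_k$; pairing the latter with $B_j$ in the Hilbert--Schmidt inner product yields the scalar identity $\sum_k{\rm Tr}(B_j B_k B_j B_k)=1$ for each $j$.

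The decisive computation is then very short. Summing the last identity over $j$ gives $\sum_{j,k}{\rm Tr}(B_j B_k B_j B_k)=d$, whereas $\sum_k B_k^2=\mathbbm 1$ gives $\sum_{j,k}{\rm Tr}(B_j^2 B_k^2)={\rm Tr}(\mathbbm 1)=d$, and hence
\begin{equation*}
\sum_{j,k}\bigl({\rm Tr}(B_j^2 B_k^2)-{\rm Tr}(B_j B_k B_j B_k)\bigr)=0.
\end{equation*}
But for Hermitian $A,B$ one has the standard identity $\|[A,B]\|_{\rm HS}^2=2({\rm Tr}(A^2 B^2)-{\rm Tr}(ABAB))\geq 0$, so every summand above is nonnegative. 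The vanishing of the total therefore forces $[B_j,B_k]=0$ for all $j,k$, and since the $B_k$ span $\B$ the subalgebra is abelian, as required.

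The main creative step --- and what I expect to be the actual obstacle --- is precisely spotting this trace identity: the subalgebra half is essentially free once Lemmas~\ref{subalg_char} and~\ref{citedlemma} are in hand, but turning the Kraus relations into genuine commutativity hinges on the observation that $\sum_{j,k}{\rm Tr}(B_j^2 B_k^2)$ and $\sum_{j,k}{\rm Tr}(B_j B_k B_j B_k)$ both collapse to $d$, by two rather different uses of the Hermitian Kraus family.
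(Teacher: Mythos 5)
Your proof is correct, and while the first half coincides with the paper (the Kraus form makes $E_\B$ completely positive, hence $2$-positive, so Lemma~\ref{subalg_char} gives that $\B$ is a $*$-subalgebra), the maximality half follows a genuinely different route. The paper never argues commutativity of $\B$ directly: it feeds an element $X'$ of the commutant $\B'$ into the Kraus identity, pulls it out to get $E_\B(X')=X'\sum_k B_k^*B_k=X'$, concludes $\B'\subset\B$ so that $\B'$ is abelian, and then invokes the bicommutant/structure theory of unital $*$-subalgebras of $M_d(\CC)$: $\B=(\B')'$ is a direct sum of full blocks, and $\dim\B=d$ forces all blocks to be $1\times 1$. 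You instead pick a Hermitian ONB $B_1,\ldots,B_d$ of $\B$ (legitimate, since $\B$ is self-adjoint and the Hilbert--Schmidt form is real on Hermitian pairs) and extract the two trace identities $\sum_{j,k}{\rm Tr}(B_jB_kB_jB_k)=d$ (from $E_\B(B_j)=B_j$) and $\sum_{j,k}{\rm Tr}(B_j^2B_k^2)=d$ (from $E_\B(\mathbbm 1)=\mathbbm 1$), so that $\sum_{j,k}\|[B_j,B_k]\|_{\rm HS}^2=0$ and $\B$ is abelian; a $d$-dimensional abelian self-adjoint family is simultaneously diagonalizable and hence a MASA. The paper's argument is shorter and basis-free but leans on operator-algebraic structure theory; yours is more elementary and self-contained, using only trace positivity, and it is worth noting that your commutativity computation does not even use the algebra property of $\B$, so the appeal to Lemma~\ref{subalg_char} is, in your version, essentially only needed if one wants $\B$ to be an algebra before identifying it as $\A_\E$ --- a fact your diagonalization step delivers anyway. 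Both arguments hinge on the same key input, the Kraus representation of $E_\B$ from the preceding lemma.
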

\begin{proof}
By our previous lemma $E_\B$ is completely positive, so by lemma \ref{subalg_char} 
$\B$ is an algebra. On the other hand, if $X'\in\B'$ then 
\begin{equation}
E_\B(X') = \sum_{k=1}^d B_k^* X' B_k = X' \sum_{k=1}^d B_k^*B_k = 
X'E_\B(\mathbbm 1) = X'
\end{equation}
showing that $\B'\subset \B$ and hence that $\B'$ is abelian. Thus 
$\B=(\B')'$ is unitarily equivalent to the subalgebra of all block-diagonal 
matrices of $M_d(\CC)$ for some fixed sequence of block-sizes. However, ${\rm 
dim}(\B)=d$ so the only possibility is that all of these blocks are 1-dimensional 
implying that $\B$ is a MASA.
\end{proof}
\begin{corollary}
Suppose that $\E_1,\ldots, \E_d$ is a collection of MUB in $\CC^d$. Then 
there exists an ONB $\E_{d+1}$ so that $\E_1,\ldots, \E_d, \E_{d+1}$ is 
a complete collection of MUB.
\end{corollary}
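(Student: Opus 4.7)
The plan is to translate the hypothesis into the language of MASAs, apply the preceding Proposition to produce a candidate algebra, and read off the missing basis; essentially all the real work has already been done upstream, and what remains is a dictionary translation.

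First I would set $\A_k := \A_{\E_k}$ for $k=1,\ldots,d$. By the equivalence recalled in the Preliminaries --- namely, that $\A_\E$ and $\A_\F$ are orthogonal subalgebras of $M_d(\CC)$ if and only if $\E$ and $\F$ are mutually unbiased --- the hypothesis delivers a system $\A_1,\ldots,\A_d$ of pairwise orthogonal MASAs, which is precisely the setting of Section~3.

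Next, I would invoke the Proposition to conclude that $\B := V^\perp$, where $V = +_{k=1}^d(\A_k \cap \{\mathbbm 1\}^\perp)$, is a MASA. Since every MASA of $M_d(\CC)$ is of the form $\A_\F$ for some ONB $\F$ (unique up to permutation and phases), this extracts an ONB $\E_{d+1}$ such that $\A_{\E_{d+1}} = \B$.

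Finally, I would check that $\E_{d+1}$ is mutually unbiased with each $\E_k$ for $k\leq d$. By the same equivalence as above, this reduces to verifying that $\B$ and $\A_k$ are orthogonal as subalgebras, i.e.\ that $\B \cap \{\mathbbm 1\}^\perp$ is orthogonal to $\A_k \cap \{\mathbbm 1\}^\perp$. But this is immediate from the construction: $V$ contains $\A_k \cap \{\mathbbm 1\}^\perp$ as one of its orthogonal summands, so its orthogonal complement $\B = V^\perp$ is in particular orthogonal to $\A_k \cap \{\mathbbm 1\}^\perp$. There is no genuine obstacle at the level of the Corollary; the nontrivial content lies entirely in the Proposition and, in turn, in the 2-positivity characterization of subalgebras provided by Lemma~\ref{subalg_char}.
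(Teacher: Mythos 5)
Your proposal is correct and follows exactly the route the paper intends: the Corollary is the direct translation of the Proposition (that $\B=V^\perp$ is a MASA) back into the language of bases, with mutual unbiasedness of $\E_{d+1}$ and $\E_k$ equivalent to orthogonality of $\B$ and $\A_k$, which holds by construction since $\A_k\cap\{\mathbbm 1\}^\perp\subset V$. No differences worth noting.
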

\bigskip

\noindent
{\bf \large Acknowledgements.} The author would like to thank prof.\!
D.\! Petz who suggested to consider this problem and T.\! Sz\H{o}nyi
for useful information on Latin squares and their letterature.

\end{document}